\theoremstyle{plain}
\newtheorem{theorem}{Theorem}
\newtheorem{proposition}[theorem]{Proposition}
\theoremstyle{definition}
\newtheorem{definition}[theorem]{Definition}
\theoremstyle{plain}
\newtheorem{example}[theorem]{Example}
\newcommand{\mucalc}{\ensuremath{\mathcal{L}_\mu}\xspace}
\newcommand{\aplayer}{\ensuremath{\mathcal{P}}\xspace}
\newcommand{\odd}{\ensuremath{\mathsf{Odd}}\xspace}
\newcommand{\even}{\ensuremath{\mathsf{Even}}\xspace}
\newcommand{\opponent}{\ensuremath{\overline{\mathcal{P}}}\xspace}
\newcommand{\Prop}{\ensuremath{\mathcal{P}}}
\newcommand{\Var}{\ensuremath{\mathcal{V}}}
\newcommand{\sem}[3]{\ensuremath{{[\![#1]\!]}_{#2}^{#3}}}
\newcommand{\prio}[1]{\ensuremath{\mathsf{prio}_{#1}}}
\newcommand{\own}[1]{#1}
\newcommand{\Nat}{\ensuremath{\mathbb{N}}}
\title{The Fixpoint-Iteration Algorithm for Parity Games}
\author{Florian Bruse \qquad Michael Falk \qquad Martin Lange
\institute{School of Electrical Engineering and Computer Science, University of Kassel, Germany}
%\email{\{florian.bruse,martin.lange\}@uni-kassel.de}
}
\begin{document}
\maketitle

\begin{abstract}
It is known that the model checking problem for the modal $\mu$-calculus reduces to the problem of solving
a parity game and vice-versa. The latter is realised by the Walukiewicz formulas which are satisfied by a node
in a parity game iff player $0$ wins the game from this node. Thus, they define her winning region, and any 
model checking algorithm for the modal $\mu$-calculus, suitably specialised to the Walukiewicz formulas, yields 
an algorithm for solving parity games. In this paper we study the effect of employing the most straight-forward
$\mu$-calculus model checking algorithm: fixpoint iteration. This is also one of the few algorithms, if not the 
only one, that were not originally devised for parity game solving already. While an empirical study quickly 
shows that this does not yield an algorithm that works well in practice, it is interesting from a theoretical
point for two reasons: first, it is exponential on virtually all families of games that were designed as
lower bounds for very particular algorithms suggesting that fixpoint iteration is connected to all those. Second, 
fixpoint iteration does not compute positional winning strategies. Note that the Walukiewicz formulas only define
winning regions; some additional work is needed in order to make this algorithm compute winning strategies. We 
show that these are particular exponential-space strategies which we call eventually-positional, and we show how
positional ones can be extracted from them.  
\end{abstract}

\section{Introduction}

Parity games are 2-person infinite-duration games with important applications in
the area of specification, verification and synthesis of reactive, distributed systems.
They are used as the algorithmic backbone in satisfiability checking for temporal logics 
\cite{journals/corr/FriedmannLL13}, in controller synthesis \cite{AVW-TCS03} and, most 
commonly known, in model checking \cite{Stirling95}. 

Typical problems from these areas reduce to the problem of \emph{solving} a parity game. In the
simplest form this just a \emph{decision} problem: given a parity game, decide which player has 
a winning strategy for it. Due to determinacy of parity games \cite{TCS::Zielonka1998} it is 
always one of the two players. This is a reasonable question in model checking for instance where
parity game solving can be used to decide whether or not a temporal property is satisfied by
a transition system.

However, suppose that the answer is ``no''. Then one usually wants to know the reason for why it
is not satisfied, and the game-theoretic approach to model checking can provide explanations of
this kind easily: the reason for the unsatisfaction is the underlying winning strategy. Equally, 
in satisfiability checking one may not only be interested in whether or not a formula is satisfiable
but may also want to obtain a model in the positive case. Such models can be derived from the underlying
winning strategies. Hence, simply deciding whether or not a winning strategy exists may not be enough
for certain purposes. This becomes very clear when considering controller synthesis: here the task is
not only to decide whether a controller would be synthesisable but to actually generate one from
specifications automatically. Again, this is closely linked to the \emph{computation} of winning
strategies, i.e.\ to solving as a computation problem. 

Parity games enjoy \emph{positional determinacy} \cite{TCS::Zielonka1998}: every game is won by 
either player with a \emph{positional} strategy that selects, for every node controlled by that 
player, one of its outgoing edges. This can be used in a na\"{\i}ve way of turning an algorithm for 
the decision problem into a computation of winning strategies: for every edge check whether or not
removing it changes the outcome of the game. This is obviously impractical, and basically all 
algorithms for solving parity games do in fact compute (positional) winning strategies directly in
all their variants: the recursive algorithm \cite{TCS::Zielonka1998,Schewe/07/Parity}, strategy 
improvement \cite{conf/cav/VogeJ00,conf/csl/Schewe08,FriedmannL:JFCS:2011}, small progress measures 
\cite{Jurdzinski/00,journals/jccs/HKLN}, etc. Also from a purely theoretical point of view it is 
equally desirable to study various ways of computing winning strategies, in particular since the
exact complexity of solving parity games is yet undetermined: it is known to be in NP$\cap$coNP 
\cite{focs91*368} -- for instance by guessing positional strategies and verifying them in 
polynomial time -- and even in UP$\cap$coUP \cite{Jurdzinski/98} but a deterministic polynomial 
time algorithm has not been found yet. The asymptocially best algorithms are subexponential, either
as randomised \cite{journals/tcs/BjorklundV05} or even as a deterministic procedure 
\cite{journals/siamcomp/JurdzinskiPZ08}.

In this paper we study an algorithm for solving parity games that is based on a tight connection to 
model checking for the modal $\mu$-calculus: not only does this model checking problem reduce -- linearly
in both input parameters -- to the problem of solving parity games \cite{Stirling95} and, hence, all 
the aforementioned algorithms can be used for model checking as well. There is also a linear reduction 
in the other direction: a parity game can be seen as a labeled transition system (LTS), and the winning regions
for the players can be defined by in the $\mu$-calculus by the so-called Walukiewicz formulas that, notably,
only depend on the number of different priorities used in the games but not the game structure itself. Thus,
parity games can also be solved -- in the sense of the decision problem -- by applying any $\mu$-calculus 
model checking algorithm to the LTS representing the game and the corresponding Walukiewicz formula. Most
algorithms for model checking the $\mu$-calculus are, however, parity game solving algorithms already with
one exception: computing the semantics of the formula on a finite LTS via explicit fixpoint iteration.
This is, for example, the algorithm underlying BDD-based model checking \cite{IC::BurchCMDH1992}. 

In Section~\ref{sec:prel} we recall parity games and the modal $\mu$-calculus. In Section~\ref{sec:fpiter}
we formulate the fixpoint iteration algorithm for solving parity games based on fixpoint iteration in the
$\mu$-calculus. Note that it only computes winning regions because the Walukiewicz formulas can only define
predicates on game nodes, not game edges. Hence, a natural question concerns the connection to computing
winning strategies. In Section~\ref{sec:winstrat} we provide the answer to this by showing that, unlike
virtually all other known algorithms, fixpoint iteration does not directly compute positional strategies. 
We show that instead it computes finite-memory winning strategies of a special kind that we call 
\emph{eventually-positional} strategies. Positional strategies can be extracted from these at a certain 
expense.

Thus, parity game theory provides positional strategies for model checking, and these are extremely useful
in finding short counterexamples to unsatisfied temporal properties. On the other hand, model checking does
not do the same for parity games. It remains to be understood what exactly causes this imbalance: the
fact that Walukiewicz formulas, fixed up to number of priorities in the game, define winning regions in 
arbitrary parity games; the nature of fixpoint iteration; or possibly even something else.

%%% Local Variables: 
%%% mode: latex
%%% TeX-master: "main"
%%% End: 

\section{Preliminaries}
\label{sec:prel}

\subsection{Parity Games}

Parity games are infinite 2-player games played on directed graphs. We call the two players \even and \odd, and write \aplayer for either of them 
and \opponent for their opponent. Formally, a \emph{parity game} is a tuple $\mathcal{G} = (V,V_\even,V_\odd,E,\Omega)$ where $(V,E)$ is a directed graph with
node set $V$ and edge relation $E$ s.t.\ every node has at least one successor. The node set is partitioned into $V_\even$ and $V_{\odd}$ marking those
nodes controlled by player \even, resp.\ \odd. Finally, $\Omega\colon V \to \Nat$ is a function that assigns
to each node a \emph{priority}. The \emph{index} of this game is $|\{\Omega(v) \mid v \in V \}|$, i.e.\ the number of different priorities occurring 
in this game. Since we only consider finite parity games, the index is well-defined.

A \emph{play} is an infinite sequence $v_0,v_1,\ldots$ such that $(v_i,v_{i+1}) \in E$ for all $i \in \Nat$. It is said to \emph{start} in $v_0$. 
The \emph{winner} of this play is determined by $\lim\sup_{i \to \infty} \Omega(v_i)$: it is won by \even if this value, i.e.\ the maximal priority
occurring infinitely often, is even. Otherwise it is won by \odd.

A \emph{strategy} for a player \aplayer is a function $\sigma \colon V^*V_{\aplayer} \to V$ such that $(v,f(uv)) \in E$ for all $v \in V_{\aplayer}$ and 
all $u \in V^*$. Thus, a strategy maps a node owned by player \aplayer to one of its successor nodes. Intuitively, it describes where player 
\aplayer should move to when a play has reached the node $v$ and so far it has gone through the sequence of nodes $u$. A play $(v_i)_{i \in \mathbb{N}}$ 
\emph{adheres} to a strategy $\sigma$ if for all $(v_i)_{i \leq k}$ such that $v_k \in
V_{\aplayer}$ we have $v_{k+1} = f((v_i)_{i \leq k})$. A strategy $\sigma$ is \emph{winning} for player \aplayer in node $v$ if every play that starts in $v$
and adheres to $\sigma$ it is won by \aplayer. We say that \aplayer \emph{wins} $\mathcal{G}$ \emph{in} $v$, or simply \emph{wins} $v$ if $\mathcal{G}$ 
is clear from the context, if \aplayer has a winning strategy in node $v$.  
The \emph{winning region} for \aplayer, usually denoted $W_\aplayer$, in $\mathcal{G}$ is the set 
of all nodes $v$ such that \aplayer wins $\mathcal{G}$ in $v$. The following result states that $W_\even$ and $W_\odd$ partition the node set of a game.

\begin{proposition}[\cite{martin75,focs91*368,TCS::Zielonka1998}]
For every game $\mathcal{G}$ and node $v$ in it either \even or \odd wins $\mathcal{G}$ in $v$.
\end{proposition}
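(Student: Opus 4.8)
The plan is to establish the sharper fact of \emph{positional} determinacy by induction on $|V|$, via the classical attractor-based recursion; this not only answers the statement but also produces the positional winning strategies referred to later in the paper. For $U \subseteq V$ and a player \aplayer, let $\mathrm{Attr}_\aplayer(U)$ denote the set of nodes from which \aplayer can force the play to visit $U$, computed by the usual least-fixpoint saturation, together with an associated positional \emph{attractor strategy}. The complement $V \setminus \mathrm{Attr}_\aplayer(U)$ is a \emph{trap} for \aplayer: every node of $V_\aplayer$ inside it has all its successors inside it, and every node of \opponent inside it has at least one successor inside it; consequently it induces a well-defined subgame (here the assumption that every node has a successor is used), and a trap for \aplayer inside the subgame induced by a trap for \aplayer is again a trap for \aplayer in the whole game --- the bookkeeping fact on which the argument rests.

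For the base case $|V| = 1$ the unique node carries a self-loop, the only play has constant priority $p$, and it is won by \even iff $p$ is even; in particular exactly one player wins. For the inductive step let $p := \max_v \Omega(v)$ and let \aplayer be the player whose parity is that of $p$. Put $A := \mathrm{Attr}_\aplayer(\Omega^{-1}(p))$ and let $\mathcal{G}'$ be the subgame on $V \setminus A$, which is strictly smaller since $\emptyset \neq \Omega^{-1}(p) \subseteq A$; by induction $V \setminus A = W'_\aplayer \uplus W'_\opponent$ with positional strategies. If $W'_\opponent = \emptyset$, then \aplayer wins all of $\mathcal{G}$ by combining her attractor strategy on $A$, an arbitrary move on $\Omega^{-1}(p)$, and her inductive winning strategy on $V \setminus A$: a resulting play either stays in $V \setminus A$ from some point on --- then its tail is a play of $\mathcal{G}'$ adhering to a winning strategy of \aplayer --- or it re-enters $A$ infinitely often, in which case the attractor strategy sends it through $\Omega^{-1}(p)$ infinitely often, so $\limsup_i \Omega(v_i) = p$; either way \aplayer wins.

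If instead $W'_\opponent \neq \emptyset$, put $B := \mathrm{Attr}_\opponent(W'_\opponent)$ in $\mathcal{G}$. Since $V \setminus A$ is a trap for \aplayer in $\mathcal{G}$ and $W'_\opponent$ is a trap for \aplayer in $\mathcal{G}'$, the set $W'_\opponent$ is a trap for \aplayer in $\mathcal{G}$, so \opponent wins every node of $B$ by first attracting to $W'_\opponent$ and then playing her inductive strategy there; thus $B \subseteq W_\opponent$. Let $\mathcal{G}''$ be the subgame on $V \setminus B$, a trap for \opponent in $\mathcal{G}$, strictly smaller as $\emptyset \neq W'_\opponent \subseteq B$; by induction $V \setminus B = W''_\aplayer \uplus W''_\opponent$ with positional strategies. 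As $V \setminus B$ is a trap for \opponent, \opponent cannot leave $W''_\aplayer$, so \aplayer's inductive strategy still wins $W''_\aplayer$ in $\mathcal{G}$; and from $W''_\opponent$ player \opponent plays her inductive strategy, switching to the winning strategy on $B$ the first time \aplayer moves into $B$ (the only way to leave $V \setminus B$), winning $W''_\opponent$ in $\mathcal{G}$. Hence $V = B \cup W''_\aplayer \cup W''_\opponent \subseteq W_\aplayer \cup W_\opponent$. Disjointness is immediate: a play adhering to alleged winning strategies of both players would be won by both, which is impossible since its winner is determined by $\limsup_i \Omega(v_i)$.

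The principal obstacle is exactly this trap/closedness bookkeeping in the two transfers of a subgame winning strategy back to $\mathcal{G}$: in each case one must verify that the edges present in $\mathcal{G}$ but absent from the subgame are either unusable by the player in question or are absorbed by the case distinction, and that the patched strategy remains positional because the regions carrying its pieces are pairwise disjoint.
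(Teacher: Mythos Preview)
The paper does not prove this proposition at all; it simply states it with citations to Martin, Emerson--Jutla, and Zielonka and moves on. Your argument is the standard attractor-based recursion from Zielonka~\cite{TCS::Zielonka1998} (one of the cited sources), and it is correct as written, including the trap bookkeeping; in fact it yields more than the stated proposition since it already delivers positional determinacy (the content of Proposition~\ref{prop:poswinstr}) and, implicitly, the uniform winning strategies of Proposition~\ref{thm:positional-strat-obda}.
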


For many applications it is not only interesting to know which player wins a node but also how it is won, i.e.\ what the winning strategy is. For these
purposes, we need to consider special strategies that can be represented finitely. The simplest of these are positional ones, also known as history-free 
or memory-less ones. A strategy $\sigma$ is called \emph{positional} if $\sigma(uv) = \sigma(u'v)$ for all $u,u' \in V^*$. Thus, the decisions made by 
such strategies only depend on the current position in a play, not the history of the play. Such strategies can easily be represented as a subset of
the edge relation $E$. Luckily, when asking for winning strategies it suffices to consider positional ones.

\begin{proposition}[\cite{TCS::Zielonka1998}]
\label{prop:poswinstr}
\aplayer wins  node $v$ in a game $\mathcal{G}$  if and only if \aplayer wins this node with a positional winning strategy.
\end{proposition}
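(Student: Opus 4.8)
The plan is to prove the nontrivial direction: if \aplayer wins node $v$, then \aplayer wins $v$ with a positional strategy. (The converse is immediate, since a positional strategy is a special case of a strategy.) I would proceed by induction on the number of distinct priorities occurring in $\mathcal{G}$, using the standard Zielonka-style attractor decomposition. First I would fix notation: for a set $U \subseteq V$ and a player \aplayer, let $\mathrm{Attr}_{\aplayer}(U)$ be the \aplayer-attractor of $U$, computed by the usual monotone fixpoint (add a node of $V_{\aplayer}$ with an edge into the current set, or a node of $V_{\opponent}$ all of whose successors are in the current set); this attractor comes with a positional \emph{attractor strategy} that forces the play into $U$ in finitely many steps. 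I would also record the easy fact that the complement of an attractor $\mathrm{Attr}_{\aplayer}(U)$ induces a subgame (every node there still has a successor inside the complement).

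The induction: if only one priority occurs, the game is trivially won everywhere by one fixed player, and any positional strategy works. For the inductive step, let $p$ be the maximal priority and say $p$ is even (the odd case is symmetric, swapping the roles of the players); let $N = \Omega^{-1}(p)$. Set $A = \mathrm{Attr}_{\even}(N)$ and consider the subgame $\mathcal{G}' = \mathcal{G} \setminus A$ on $V \setminus A$. Let $W'_{\odd}$ be \odd's winning region in $\mathcal{G}'$, which by the induction hypothesis is won by \odd with a positional strategy $\tau'$ in $\mathcal{G}'$. Now form $B = \mathrm{Attr}_{\odd}(W'_{\odd})$ in the full game $\mathcal{G}$. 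If $B = \emptyset$ I would argue that \even wins all of $\mathcal{G}$ positionally: she plays her attractor strategy to $N$ on $A$, and on $V \setminus A = V(\mathcal{G}')$ she plays a positional winning strategy given by the induction hypothesis on the residual subgame $\mathcal{G}' \setminus W'_{\odd}$ (where \even wins everywhere); one checks any resulting play either stays in $\mathcal{G}' \setminus W'_{\odd}$ from some point on — hence is won by \even by the induction hypothesis — or visits $A$, hence $N$, hence priority $p$, infinitely often, which is even. If $B \neq \emptyset$, then \odd wins all of $B$ positionally in $\mathcal{G}$ by combining his attractor strategy into $W'_{\odd}$ with $\tau'$ (valid because $W'_{\odd}$ and its $\mathcal{G}'$-subgame are untouched by removing $A$), and one recurses on the strictly smaller subgame $\mathcal{G} \setminus B$.

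The one genuinely delicate point, and the step I expect to be the main obstacle to write carefully, is the "gluing" argument for a composite positional strategy: one must verify that pasting together an attractor strategy on one region with an inductively obtained winning strategy on the complementary subgame still yields a \emph{globally} winning strategy, i.e.\ that a play cannot oscillate between the two regions in a way that defeats both components. The key observations that make this work are (i) an \opponent-attractor region, once entered under \aplayer's composite strategy, is never left except possibly into \aplayer's target set, so infinite dwelling in the attractor part forces infinitely many visits to the dominating priority; and (ii) the complementary piece is a genuine subgame closed under the relevant moves, so the inductive winning guarantee applies verbatim to any play eventually trapped there. Once this gluing lemma is in hand, the induction closes and both $W_{\even}$ and $W_{\odd}$ are seen to be won positionally, which in particular re-proves the partition statement of the previous proposition as a by-product.
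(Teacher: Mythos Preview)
The paper does not supply its own proof of this proposition; it is stated as a known result and attributed to Zielonka via the citation \cite{TCS::Zielonka1998}. Your proposal is essentially the classical Zielonka attractor-decomposition argument, which is indeed the proof given in that reference, so in that sense your sketch matches what the paper defers to.

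One small technical point worth tightening: you declare the induction to be on the number of distinct priorities, but in the case $B \neq \emptyset$ you ``recurse on the strictly smaller subgame $\mathcal{G} \setminus B$''. This subgame need not have fewer priorities than $\mathcal{G}$ --- removing $B$ may well leave nodes of every priority, including the maximal one $p$. The usual fix is to induct on the number of nodes, or lexicographically on the pair (index, number of nodes); then $\mathcal{G}' = \mathcal{G} \setminus A$ has strictly smaller index (since $N \subseteq A$, so priority $p$ disappears), while $\mathcal{G} \setminus B$ has strictly fewer nodes (since $B \neq \emptyset$), and both recursive calls are justified. With that adjustment the argument goes through exactly as you outline, including the gluing step you correctly flag as the part requiring the most care.
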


Finally, when solving a game it is not necessary to compute winning strategies for every node independently. Instead, it suffices to compute at most
one per player.

\begin{proposition}[\cite{TCS::Zielonka1998}]
\label{thm:positional-strat-obda}
Let $U$ be a set of nodes in a game $\mathcal{G}$. If \aplayer has (positional) winning strategies $\sigma_u$ for all $v \in U$ that are winning in 
each $v$ respectively, then there is a (positional) winning strategy $\sigma$ for him/her that is winning in all $v \in U$. 
\end{proposition}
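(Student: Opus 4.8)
The plan is to build the single strategy $\sigma$ from the family $(\sigma_v)_{v \in U}$ by a ``first one that applies'' mechanism, exploiting the fact that each $\sigma_v$ is itself positional and hence can be described as a partial subset of $E$. First I would invoke Proposition~\ref{prop:poswinstr} to assume without loss of generality that each $\sigma_v$ is positional, so it is determined by its set of chosen edges on the nodes of $V_\aplayer$. Fix an arbitrary linear order $v_1, \dots, v_k$ on the elements of $U$. The idea is to define $\sigma$ so that, on any node $w \in V_\aplayer$, it follows $\sigma_{v_i}$ for the \emph{least} $i$ such that $w$ lies in the part of the graph that $\sigma_{v_i}$ ``traps'': more precisely, let $D_i$ be the set of nodes reachable from $v_i$ when \aplayer\ plays $\sigma_{v_i}$ and \opponent\ plays arbitrarily. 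On $D_i \setminus (D_1 \cup \dots \cup D_{i-1})$ let $\sigma$ agree with $\sigma_{v_i}$; on nodes outside all $D_i$ let $\sigma$ make an arbitrary legal choice. Since each $\sigma_i$ is winning from $v_i$, its reachable set $D_i$ is closed under \opponent-moves and under the $\sigma_{v_i}$-move, and every play inside $D_i$ adhering to $\sigma_{v_i}$ is won by \aplayer.

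The key verification is then: any play starting in some $v_j \in U$ and adhering to $\sigma$ is won by \aplayer. Such a play begins in $D_j$. I would argue by tracking the index $i(w)$ = least $i$ with $w \in D_i$ along the play. Whenever the play sits at a node $w$ with current index $i$, the next node $w'$ — whether chosen by \opponent\ or by $\sigma = \sigma_{v_i}$ at $w$ — still lies in $D_i$ (because $D_i$ is closed under both kinds of moves), so $i(w') \le i$. Thus the index sequence is non-increasing along the play and therefore eventually stabilises at some value $m$. From that point on the play stays inside $D_m$ and follows $\sigma_{v_m}$, i.e.\ it is (a suffix of) a play adhering to $\sigma_{v_m}$ inside $D_m$; since $\sigma_{v_m}$ is winning from $v_m$, that suffix — and hence the whole play, as the winner depends only on the limsup of priorities — is won by \aplayer. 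For the non-positional clause of the statement, essentially the same construction works: instead of sets of edges, compose the strategies by letting $\sigma$ simulate $\sigma_{v_i}$ for the least $i$ whose ``zone'' the current history has entered, which again requires only that one can detect, from the play prefix, that one has been pushed into $D_i$.

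The main obstacle, and the only place that needs care, is the interface between the zones $D_i$: one must be sure that once a play leaves the region governed by $\sigma_{v_i}$ it can only move to a region governed by some $\sigma_{v_{i'}}$ with $i' < i$ (never a larger index), which is exactly the closure property of $D_i$ under all moves consistent with $\sigma_{v_i}$ — and this closure is precisely what ``$\sigma_{v_i}$ is winning from $v_i$'' buys us, since a reachable node from which \opponent\ could escape the winning region would contradict winningness. I would isolate this closure fact as the single lemma driving the argument. Everything else — well-definedness of $i(w)$ on $\bigcup_i D_i$, monotonicity of $i(w)$ along plays, stabilisation, and the final appeal to the limsup winning condition — is then routine.
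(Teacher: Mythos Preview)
The paper does not prove this proposition; it is stated as a known result with a citation to Zielonka~\cite{TCS::Zielonka1998} and left without argument. Your proposal is therefore not comparable to anything in the paper, but it is the standard construction (essentially Zielonka's) and is correct.

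One point deserves a sentence more of care than you give it: after the index stabilises at $m$, the suffix of the play is a $\sigma_{v_m}$-conform play starting at some node $w \in D_m$, not at $v_m$ itself. You need that $\sigma_{v_m}$ is winning from \emph{every} node of $D_m$, not only from $v_m$. This follows because any node $w \in D_m$ is, by definition of $D_m$, reachable from $v_m$ via a finite $\sigma_{v_m}$-conform prefix, and prepending that prefix turns the suffix into a $\sigma_{v_m}$-conform play from $v_m$; tail-invariance of the parity condition then gives the conclusion. You gesture at this with ``the winner depends only on the limsup of priorities'', but the missing half is the prefix-extension step, which is what actually lets you invoke winningness \emph{from $v_m$}. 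With that made explicit, the argument is complete.
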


We can therefore formulate the problem of \emph{solving} a finite parity game $\mathcal{G}$ as follows: compute the winning regions $W_\even$ and
$W_\odd$ together with corresponding (positional) winning strategies $\sigma_\even$ and $\sigma_\odd$. Clearly, it suffices to compute $W_\even$
\emph{or} $W_\odd$ since they are complements to each other. However, winning strategies for one player cannot easily be obtained from winning
strategies for the other player.

% A strategy is called \emph{eventually positional} if there is $n\in \mathbb{N}$ such that for all $uv, u'v \in V^* V$ such that $v \in V_{\aplayer}$ and
% $|u|_{v} \geq n$ and $|u'|_{v} \geq n$ then $f(uv) = f(u'v)$. A strategy is called \emph{positional} if this number is $0$, i.e., the value of $f(uv)$
% depends on $v$ only.

\subsection{The Modal $\mu$-Calculus}

The modal $\mu$-calculus (\mucalc) is usually interpreted over labeled transition systems. Here we are only interested in the application of a particular
$\mu$-calculus model checking algorithm for solving parity games. To this end, one can regard parity games very naturally as labeled transition systems.
However, we avoid introducing them in full generality and simply interpret the modal $\mu$-calculus over parity games directly. Besides saving
some space, this has two simple consequences for the syntax and semantics of the modal $\mu$-calculus presented here. Syntactically, formulas do
not use arbitrary atomic propositions but only $\Prop := \{\own{\even},\own{\odd}\} \cup \{ \prio{i} \mid i \in \Nat \}$ in order to mark the ownership and
priority of a node in a game. Semantically, every node satisfies exactly two of these since every node has a unique owner and unique priority. This
also means that we do not need negation in the logic's syntax.

Let $\Var$ be a set of second-order variables. The syntax of \mucalc-formulas in positive normal form is generated by the following grammar:
\[
\varphi := \own{\aplayer} \mid \prio{i} \mid \overline{\prio{i}} \mid X \mid \varphi \wedge \varphi \mid \varphi \vee \varphi \mid \Diamond \varphi \mid \Box \varphi \mid
\mu X.\varphi \mid \nu X. \varphi
\]
where $\aplayer \in \{\even,\odd\}$, $i \in \Nat$ and $X \in \Var$. 

Given a parity game $\mathcal{G} = (V,V_\even,V_\odd,E,\Omega)$, an \emph{assignment} is a partial mapping $\rho\colon\Var\to 2^{V}$. The semantics of 
a formula $\varphi$ in $\mathcal{G}$ is a set of nodes, inductively defined as follows.
\[
\begin{aligned}
\sem{\own{\aplayer}}{\rho}{\mathcal{G}} \enspace &:=\enspace V_\aplayer \enspace, \text{ for } \aplayer \in \{\even,\odd\} \\
\sem{\prio{i}}{\rho}{\mathcal{G}} \enspace &:= \enspace \{v \in V \mid \Omega(v) = i \} \\
\sem{\overline{\prio{i}}}{\rho}{\mathcal{G}} \enspace &:= \enspace \{v \in V \mid \Omega(v) \ne i \} \\
\sem{X}{\rho}{\mathcal{G}} \enspace &:= \enspace \rho(X) \\ 
\sem{\varphi \vee \psi}{\rho}{\mathcal{G}} \enspace &:= \enspace \sem{\varphi}{\rho}{\mathcal{G}} \cup \sem{\psi}{\rho}{\mathcal{G}} \\
\sem{\varphi \wedge \psi}{\rho}{\mathcal{G}} \enspace &:= \enspace \sem{\varphi}{\rho}{\mathcal{G}} \cap \sem{\psi}{\rho}{\mathcal{G}} \\
\sem{\Diamond \varphi}{\rho}{\mathcal{G}} \enspace &:= \enspace \{v \in V\mid \exists u \in \sem{\varphi}{\rho}{\mathcal{G}}\text{ s.t. } (v,u) \in E\} \\
\sem{\Box \varphi}{\rho}{\mathcal{G}} \enspace &:= \enspace     \{v \in V\mid \forall u \in V: \text{ if }(v,u) \in E \text{ then } u \in \sem{\varphi}{\rho}{\mathcal{G}}\} \\
\sem{\mu X. \varphi}{\rho}{\mathcal{G}} \enspace &:= \enspace    \bigcap\{T \subseteq V \mid \sem{\varphi}{\rho[X\mapsto T]}{\mathcal{G}} \subseteq T   \}  \\
\sem{\nu X. \varphi}{\rho}{\mathcal{G}} \enspace &:= \enspace   \bigcup\{T \subseteq V \mid T \subseteq \sem{\varphi}{\rho[X\mapsto T]}{\mathcal{G}}    \}
\end{aligned}
\]
% For a transition system $\Transsys$ and a distinguished state $s$ we say that a closed \mucalc-formula $\varphi$ holds in $s$ if $s \in \sem{\varphi}{\tau}{\Transsys}$, where $\tau$ is the empty assignment. We also write $\Transsys, s \models \varphi$ to denote this. 

A simple model checking algorithm for \mucalc can simply compute the semantics of a given formula on a given parity game by induction on the formula
structure, using the definition of extremal fixpoints as meets and joins in a complete lattice. Computing the semantics of fixpoint formulas can be done more efficiently using \emph{fixpoint iteration}. 
%
% Following Tarski, we give an alternative, inductive definition of the semantics of the least and greatest fixpoints. It is a standard exercise to prove
% that \mucalc-formulas in positive normal form are monotone in the following sense: For every formula $\varphi$, we have $\sem{\varphi}{\rho}{\Transsys}
% \subseteq \sem{\varphi}{\rho'}{\Transsys}$ if $\rho(X) \subseteq \rho'(X)$ for all $X \in \Var$.
%
Consider a least fixpoint formula $\mu X. \varphi$, an assignment $\rho$ and a finite parity game $\mathcal{G}$ of $n$ nodes. Define a chain of sets 
$(X_\rho^i)_{i \leq n}$, via 
\[
X_\rho^0 \enspace :=\enspace \emptyset \quad, \qquad 
X_\rho^{i+1} \enspace :=\enspace \sem{\varphi}{\rho[X\mapsto X_\rho^i]}{\mathcal{G}}
\]
It is a standard exercise to show that $\varphi$ is monotone in the variable $X$. Hence, we have $X_\rho^0 \subseteq X_\rho^1 \subseteq \ldots \subseteq X_\rho^n$ and
therefore either $X_\rho^n = V$ or there is some $i < n$ with $X_\rho^i = X_\rho^{i+1}$, i.e.\ this sequence stabilises after at most $n$ iterations. We write $X_\rho^*$
for the value in this sequence when it becomes stable. Note that it equals $X_\rho^n$ but it may of course be obtained much earlier than after $n$ steps,
and this is why we prefer to denote it $X_\rho^*$.

Dually, for a greatest fixpoint formula $\nu Y.\psi$ we define
\[
Y_\rho^0 \enspace :=\enspace V \quad, \qquad 
Y_\rho^{i+1} \enspace :=\enspace \sem{\psi}{\rho[Y\mapsto Y_\rho^i]}{\mathcal{G}}
\]
and obtain $Y_\rho^0 \supseteq Y_\rho^1 \supseteq \ldots \supseteq Y_\rho^n$ with the same stabilisation property and a stable value $Y_\rho^*$. It is well-known 
that the points at which these chains become stable coincide with the corresponding fixpoints.

\begin{proposition}[\cite{Tars55}]
Let $\mu X.\varphi$ and $\nu Y.\psi$ be formulas, let $\rho$ be an assignment, and let $X_\rho^*$, $Y_\rho^*$ be defined as above for some finite parity game $\mathcal{G}$. Then we have
$\sem{\mu X. \varphi}{\rho}{\mathcal{G}} = X_\rho^*$ and $\sem{\nu Y. \psi}{\rho}{\mathcal{G}} = Y_\rho^*$. 
\end{proposition}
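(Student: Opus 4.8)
The plan is to run the standard Knaster--Tarski/Kleene fixpoint-iteration argument in the powerset lattice $2^V$. Write $F(T) := \sem{\varphi}{\rho[X\mapsto T]}{\mathcal{G}}$ for the operator associated with $\mu X.\varphi$; by the monotonicity of $\varphi$ in $X$ (the ``standard exercise'' recorded above), $F$ is monotone with respect to $\subseteq$. By definition, $\sem{\mu X.\varphi}{\rho}{\mathcal{G}} = \bigcap\{T \subseteq V \mid F(T) \subseteq T\}$ is the intersection of all \emph{pre-fixpoints} of $F$, and I will show this equals $X_\rho^*$. The reasoning for $\nu Y.\psi$ is entirely dual: with $G(T) := \sem{\psi}{\rho[Y\mapsto T]}{\mathcal{G}}$, the semantics $\sem{\nu Y.\psi}{\rho}{\mathcal{G}} = \bigcup\{T \mid T \subseteq G(T)\}$ is the union of all \emph{post-fixpoints}, and the descending chain $(Y_\rho^i)_i$ starting at $V$ plays the role of the ascending one. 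So I would spell out the $\mu$ case and then observe that dualising $\subseteq$, $\cap/\cup$, $\emptyset/V$, and ``increasing''/``decreasing'' yields the $\nu$ case verbatim.

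First I would check that $X_\rho^*$ is a fixpoint of $F$, i.e.\ $F(X_\rho^*) = X_\rho^*$, using the dichotomy already noted before the statement. If the chain stabilises early, say $X_\rho^i = X_\rho^{i+1}$ with $i < n$, then $X_\rho^* = X_\rho^i$ and $F(X_\rho^*) = F(X_\rho^i) = X_\rho^{i+1} = X_\rho^*$. If instead $X_\rho^n = V$, then on one hand $F(X_\rho^n) = F(V) \subseteq V = X_\rho^n$, and on the other hand monotonicity applied to $X_\rho^{n-1} \subseteq X_\rho^n$ gives $X_\rho^n = F(X_\rho^{n-1}) \subseteq F(X_\rho^n)$; hence again $F(X_\rho^*) = F(X_\rho^n) = X_\rho^n = X_\rho^*$. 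Second, I would prove by induction on $i$ that $X_\rho^i \subseteq T$ for every pre-fixpoint $T$ of $F$: the base case is $X_\rho^0 = \emptyset \subseteq T$, and the step uses $X_\rho^{i+1} = F(X_\rho^i) \subseteq F(T) \subseteq T$, where the first inclusion is monotonicity and the second is $T$ being a pre-fixpoint. Therefore $X_\rho^* \subseteq T$ for every such $T$, so $X_\rho^* \subseteq \bigcap\{T \mid F(T) \subseteq T\} = \sem{\mu X.\varphi}{\rho}{\mathcal{G}}$. Conversely, being a fixpoint, $X_\rho^*$ is in particular a pre-fixpoint, hence one of the sets in that intersection, so $\sem{\mu X.\varphi}{\rho}{\mathcal{G}} \subseteq X_\rho^*$. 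The two inclusions give $\sem{\mu X.\varphi}{\rho}{\mathcal{G}} = X_\rho^*$, and the dual argument gives $\sem{\nu Y.\psi}{\rho}{\mathcal{G}} = Y_\rho^*$.

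I do not expect any genuine obstacle: this is essentially the finitary Knaster--Tarski theorem and the proof is routine. The only points needing a little care are the uniform handling of the two cases in the ``$X_\rho^*$ is a fixpoint'' step --- where the case $X_\rho^n = V$ must additionally invoke monotonicity to obtain the inclusion $X_\rho^n \subseteq F(X_\rho^n)$ that comes for free under early stabilisation --- and transcribing the $\mu/\nu$ dualisation without slips. Finiteness of $\mathcal{G}$ enters only to ensure the chains stabilise after finitely many steps; over infinite transition systems one would need transfinite iteration, which is precisely why the proposition is stated for finite games.
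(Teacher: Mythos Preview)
Your argument is correct and is precisely the standard Knaster--Tarski/Kleene proof one would expect. Note, however, that the paper does not actually prove this proposition: it is stated with a citation to Tarski's 1955 paper and left without proof, being a classical result quoted as background. So there is no ``paper's own proof'' to compare against; what you have written is essentially the textbook argument that the citation points to, and nothing more is needed.
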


 % this sequence is increasing and must stabilize at some point. It will stabilize after at most $m$ steps, because adding at least one new element $m$ times will result in the whole set $S$ being accumulated. Clearly the set computed like this is a fixpoint of the operator $X\mapsto \varphi(X)$. It is also smaller or equal to the least fixpoint: The starting set $\emptyset$ certainly is, and by monotonicity each approximation keeps this property. Hence, this inductive definition for least fixpoints coincides with the above definition. For a greatest fixpoint, begin with $S$; the sequence will be decreasing. 

Note that these fixpoints may depend on  $\rho$. When using fixpoint iteration for the computation of the value of formulas with 
nested fixpoints like $\nu Y.\mu X.(\prio{17} \wedge \Diamond Y) \vee \Diamond X$ for instance it is necessary to nest the corresponding iterations 
as well. Starting with $Y^0 = V$ and $X^0 = \emptyset$ we obtain $\rho_0\colon Y\mapsto V$. Hence, we can compute $X_{\rho_0}^1, X_{\rho_0}^2, \ldots, X_{\rho_0}^*$ which then becomes $Y^1$. Then it is necessary to 
redo the inner iteration with $\rho_1\colon Y\mapsto Y^1$, instead of $Y \mapsto Y^0$, which may yield a different fixpoint $X_{\rho_1}^*$ which becomes the new value $Y^2$, and
so on. Thus, if there are $n$ nodes in the parity game, this nested fixpoint iteration may take up $n^2$ many iterations: $n$ for the outer, and in each of them an inner one of at most $n$.

In the above explanation of these \emph{approximants} we made the dependency on the assignment visible and wrote $X^*_\rho$ for
instance. As said above, we are only interested in very particular formulas of \mucalc for the purpose of designing a parity game solver, and we  therefore make the assignment visible in a different way in the presentation of this algorithm in the next section.

%%% Local Variables: 
%%% mode: latex
%%% TeX-master: "main"
%%% End: 

\section{The Fixpoint Iteration Algorithm}
\label{sec:fpiter}

\subsection{Defining Winning Regions in the Modal $\mu$-Calculus}

Let $\mathcal{G} = (V,V_\even,V_\odd,E,\Omega)$ be a fixed finite parity game. We assume that $\Omega: V \to \{0,\ldots,d-1\}$ for some $d \ge 1$, 
i.e.\ the all priorities occurring in $\mathcal{G}$ are between $0$ and $d-1$ inclusively. Thus, its index is $d$. It is well-known that every parity 
game can equivalenty be transformed into one in which the least priority is either $0$ or $1$, and whenever a larger priority $i$ occurs in the game 
then so does $i-1$; the transformation is also known as \emph{priority compression}. Thus, the priority function can be assumed to map into 
$\{0,\ldots,d-1\}$ or $\{1,\ldots,d\}$. The second case is subsumed by the first, but it is also very simple to adjust what follows to deal with
this range.  

Walukiewicz \cite{DBLP:conf/stacs/Walukiewicz96} has shown that the winning region for player $\even$ in $\mathcal{G}$ can be defined by the
\mucalc formula
\[
\Xi_d \enspace := \enspace \sigma X_{d-1} \ldots \mu X_1.\nu X_0. \big((\own{\even} \to \Diamond(\bigwedge\limits_{i=0}^{d-1} \prio{i} \to X_i)) \wedge 
(\own{\odd} \to \Box( \bigwedge_{i = 0}^{d-1} \prio{i} \to X_i))\big)
\]
Here, $\sigma$ is $\nu$ if $d$ is odd and $\mu$ if it is even.

Note that this is not representable in the syntax defined in the previous section, simply because of the absence of logical implication. It is
possible, though, to rewrite this formula using the fact that the propositions $\own{\even}$ and $\own{\odd}$ are mutually exclusive (every node
satisfies exactly one of these) and that the sets of priorities partition the set of nodes. Moreover, a diamond operator commutes with disjunctions, and a box operator commutes with conjunctions. Thus,
the Walukiewicz formulas are equivalent -- over parity games -- to the following.
\[
\Phi_d \enspace := \enspace \sigma X_{d-1} \ldots \mu X_1.\nu X_0. \big(\underbrace{(\own{\even} \wedge \bigvee\limits_{i=0}^{d-1} \Diamond(\prio{i} \wedge X_i)) \vee 
(\own{\odd} \wedge \bigwedge_{i = 0}^{d-1} \Box(\overline{\prio{i}} \vee X_i))}_{\Psi(X_{d-1},\ldots,X_0)}\big)
\]
The main advantage of this form is, however, not the representability in the syntax of the previous section but an efficiency gain for model checking
and therefore for computing winning regions. Remember the remark above about nested fixpoint iterations. Model checking this formula by fixpoint 
iteration starts by initialising all $X_{d-1},\ldots,X_0$ with $V$ or $\emptyset$ alternatingly, then evaluating $\Psi(X_{d-1},\ldots,X_0)$ which 
yields the value for $X_0^1$. This value now replaces the old value for $X_0$, and a re-evaluation of $\Psi$ under this variable assignment yields
$X_0^2$ etc.\ until $X_0^*$ is found which also becomes $X_1^1$. Then the first iteration for $X_1$ is finished and a new inner iteration with 
$X_0 := V$ needs to be started. 

Note how the evaluation of $\Phi_{d-1}$ by fixpoint iteration continuously evaluates $\Psi$ under assignments to $X_{d-1},\ldots,X_0$ that vary in a very
regular way: the value for some $X_i$, $i < d$ only changes (from $X_i^j$ to $X_i^{j+1}$) when the fixpoint $X_{i-1}^*$ has been found. Thus, many
values for variables $X_i$ stay the same during successive evaluations of $\Psi$, and $\Phi_d$ is designed to make use of that. Whenever the value
of an $X_i$ changes, we only need to recompute the values of $\Diamond(\prio{i} \wedge X_i)$ and $\Box(\overline{\prio{i}} \vee X_i)$ while the
other disjuncts and conjuncts remain the same.

\subsection{Solving Parity Games by Fixpoint Iteration}

\begin{algorithm}[t]
\begin{minipage}{9cm}
\begin{algorithmic}[1]
\Procedure{FPIter}{$\mathcal{G} = (V,V_\even,V_\odd,E,\Omega)$}
%  \State $d \gets \max \{ \Omega(v) \mid v \in V \}$
  \For{$i \gets d-1,\ldots,0$} \label{line:startinit}
%     \State $P_i \gets \{ v \mid \Omega(v) = i \}$; $\overline{P_i} \gets V \setminus P_i$
     \State \textsc{Init}($i$)
  \EndFor \label{line:stopinit}
  \Repeat \label{line:loopstart}
    \State $\mathit{count}[0] \gets \mathit{count}[0]+1$
    \State $X'_0 \gets X_0$; $X_0 \gets \textsc{Diamond}() \cup \textsc{Box}()$ \label{line:evalpsi}
    \State $i \gets 0$
    \While{$X_i = X'_i$ and $i < d-1$} \label{line:shiftstart} 
      \State $i \gets i+1$
      \State $\mathit{count}[i] \gets \mathit{count}[i]+1$
      \State $X'_i \gets X_i$; $X_i \gets X_{i-1}$
      \State \textsc{Init}($i-1$) \label{line:initcall}
    \EndWhile \label{line:shiftend}
  \Until{$i = d-1$ and $X_{d-1} = X'_{d-1}$} \label{line:loopend}
  \State \textbf{return} $X_{d-1}$
\EndProcedure
\end{algorithmic}
\end{minipage}
\begin{minipage}{6cm}
\begin{algorithmic}[1]
\Function{Init}{$i$}
  \State $X_i \gets$ \textbf{if} $i$ is even \textbf{then} $V$ \textbf{else} $\emptyset$
  \State $\mathit{count}[i] \gets 0$
\EndFunction
\end{algorithmic}
\vskip3mm
\begin{algorithmic}[1]
\Function{Diamond}{\,\!}
  \State \textbf{return} $\{ v \in V_\even \mid \exists t \in V. (v,t) \in E \text{ and } t \in X_{\Omega(t)} \}$
\EndFunction
\end{algorithmic}
\vskip3mm
\begin{algorithmic}[1]
\Function{Box}{\,\!}
  \State \textbf{return} $\{ v \in V_\odd \mid \forall t \in V. (v,t) \in E \text{ implies } t \in X_{\Omega(t)} \}$
\EndFunction
\end{algorithmic}
\end{minipage}

\caption{Solving parity games by fixpoint iteration.}
\label{alg:fpiter}
\end{algorithm}
 
Algorithm~\ref{alg:fpiter} takes a finite parity game and returns the winning region for player \even in this game. It does so by computing the semantics
of variant $\Phi_d$ of the Walukiewicz formula by fixpoint iteration, assuming that $d-1$ is the maximal priority occurring in $\mathcal{G}$. Function
\textsc{Init} initialises the fixpoint variables with values depending on whether they are being used for a least or a greatest fixpoint iteration.

The main part is the loop in lines \ref{line:loopstart}--\ref{line:loopend}. Here, we evaluate the inner expression $\Psi(X_d,\ldots,X_0)$ with respect
to the current values of these variables. %We use the notation $\Diamond S$ and $\Box S$\TODO{aendern!} with a set $S \subseteq V$ as abbreviations for the set of all nodes that have a successor in $S$, respectively have all successors in $S$. 
The result of the evaluation of the expression $\Psi$ is then stored in
$X_0$ and, whenever the fixpoint for $X_i$ has been reached, also in $X_{i+1}$. This is done in the loop in lines
\ref{line:shiftstart}--\ref{line:shiftend} which then also reset the values of $X_j$ for $j < i$. The variables $X'_i$ are being used to check whether or
not the value for some $X_i$ has changed in an iteration. The procedure terminates when the fixpoint for the outermost iteration has been found. The
result is the last evaluation of $\Psi$ which is then stored in $X_{d-1}$.

Additionally, Algorithm~\ref{alg:fpiter} maintains an array $\mathit{count}$ which stores in its $i$-th entry the current number of the iteration for
variable $X_i$. It has no effect on the returned result but it allows the running time to be estimated elegantly and it is indispensable for an 
extension that also computes winning strategies, to be studied in the next section.

Correctness of this algorithm is a simple consequence of the fact that the Walukiewicz formulas $\Xi_d$ correctly define player \even's winning region
\cite{DBLP:conf/stacs/Walukiewicz96}, that $\Phi_d$ is equivalent to $\Xi_d$, that fixpoint iteration correctly computes the 
semantics of an \mucalc formula and that Algorithm~\ref{alg:fpiter} is obtained by specialisation of a general model checking algorithm for \mucalc by 
fixpoint iteration (e.g.\ \cite{IC::BurchCMDH1992}) to $\Phi_d$. We omit a formal proof of these easily checkable facts.
 
\begin{theorem}
Given a finite parity game $\mathcal{G}$ with $n$ nodes, $e$ edges and index $d$, Algorithm~\ref{alg:fpiter} returns the winning region $W_\even$ in this 
game in time $\mathcal{O}(e \cdot n^d)$.
\end{theorem}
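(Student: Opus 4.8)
The plan is to bound the total number of evaluations of the inner expression $\Psi$ and multiply by the cost of one such evaluation. First I would analyse one call to $\Psi$ via the functions \textsc{Diamond} and \textsc{Box}: each inspects every edge $(v,t)$ once and performs a constant-time membership test $t \in X_{\Omega(t)}$ (membership in the relevant variable's current value), so one evaluation of \textsc{Diamond}() $\cup$ \textsc{Box}() together with the $\mathcal{O}(d)$-length shift loop in lines \ref{line:shiftstart}--\ref{line:shiftend} costs $\mathcal{O}(e + d) = \mathcal{O}(e)$, since $d \le n \le e$.

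Next I would bound the number of iterations of the main loop, i.e.\ the number of times line \ref{line:evalpsi} is executed, which equals the final value of $\mathit{count}[0]$. The key observation is that the array $\mathit{count}$ behaves like a mixed-radix odometer over the indices $0,\ldots,d-1$: whenever the inner fixpoint for $X_{i-1}$ stabilises (detected by $X_{i-1} = X'_{i-1}$), the shift loop increments $\mathit{count}[i]$, copies $X_{i-1}$ into $X_i$, and re-initialises all lower variables, which resets $\mathit{count}[i-1],\ldots,\mathit{count}[0]$ to $0$ via the \textsc{Init} calls in line \ref{line:initcall}. Standard fixpoint-iteration monotonicity (each $X_i$ runs through a chain that strictly increases or strictly decreases until it stabilises, on a domain of size $n$) gives that between two successive resets of $\mathit{count}[i]$ its value is incremented at most $n$ times; equivalently $\mathit{count}[i] \le n$ is maintained as an invariant throughout. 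A straightforward induction on $d-1-i$, or a direct counting argument, then yields that the total number of times line \ref{line:evalpsi} runs — the final value of $\mathit{count}[0]$, accumulated over all resets — is at most $n^d$. (More precisely one shows the number of main-loop iterations is bounded by $\prod_{i=0}^{d-1} n = n^d$, since the iteration pattern is exactly that of incrementing a $d$-digit counter in base $n$.)

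Combining the two bounds gives total running time $\mathcal{O}(e \cdot n^d)$, plus the $\mathcal{O}(d \cdot n)$ spent on the initial \textsc{Init} loop in lines \ref{line:startinit}--\ref{line:stopinit}, which is absorbed. Correctness of the returned value $X_{d-1}$ has already been argued in the text preceding the theorem (equivalence of $\Phi_d$ and $\Xi_d$, correctness of the Walukiewicz formula, and correctness of generic fixpoint iteration), so nothing further is needed there.

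The main obstacle is the counter/odometer argument for the iteration count: one must argue carefully that the nesting structure of the shift loop really does realise the mixed-radix increment pattern — in particular that resetting the lower variables via \textsc{Init} is legitimate (it is, because once $X_i$ changes, the inner fixpoints genuinely must be recomputed from scratch, as the text already notes) and that no iteration is ``lost'' or double-counted at the boundary where several digits roll over simultaneously. Making the invariant ``$\mathit{count}[i] \le n$ for all $i$, and the number of completed main-loop passes is at most the value of the $d$-digit base-$n$ number $(\mathit{count}[d-1],\ldots,\mathit{count}[0])$'' precise and checking it is preserved by one loop pass is the technical heart; everything else is routine.
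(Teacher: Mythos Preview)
Your proposal is correct and matches the paper's own argument essentially point for point: the paper also bounds the number of main-loop iterations by observing that the array $\mathit{count}$ increases lexicographically with each entry bounded by $n$ (your odometer metaphor is the same observation), and then charges $\mathcal{O}(e)$ per iteration using $e \ge n \ge d$ to absorb the inner while-loop. Your treatment is slightly more verbose about the reset mechanics, but there is no substantive difference in approach.
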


The estimation on the running time can be deduced from the observation that the value of the array $\mathit{count}$ grows lexicographically in each
iteration of the loop in lines \ref{line:loopstart}--\ref{line:loopend}. Every entry can be at most $n$ because each fixpoint iteration for a particular
$X_i$ must reach a stable value for $X_i$ after at most $n$ steps. This bounds the number of iterations of this loop by $\mathcal{O}(n^d)$. Each
iteration can be carried out in time $\mathcal{O}(e)$, remembering that $e \ge n \ge d$. The main task in such an iteration is the evaluation of $\Psi$ in
line \ref{line:evalpsi} which takes this time. The other tasks can be done in constant time; the innermost while-loop can do at most $d$ many iterations.

\section{Computing Winning Strategies from Fixpoint Iteration}
\label{sec:winstrat}

This section studies the possibility to extend the fixpoint iteration algorithm such that it does not only return player \even's winning region
(and, by determinacy, also player \odd's) but also corresponding winning strategies for both players. Intuitively, a computation of at least one
winning strategy must be hidden in the algorithm for otherwise the computation of the winning region would be miraculous. In the following we will
show in detail which winning strategies Algorithm~\ref{alg:fpiter} computes -- even for both players -- in order to find the corresponding winning
regions. 

\subsection{Extracting Strategies by Recording the Evaluation of Modal Operators}

An obvious starting point is the examination of line~\ref{line:evalpsi}. Here, a new value for $X_0$ is being computed, and it is not hard to see
that the finally returned value $X_{d-1}$ was at some point also a value of $X_0$. Note that the value of $X_0$ is shifted into $X_1,X_2,\ldots$ for as
long as no change to a previous value of these variables is detected. The expression in line~\ref{line:evalpsi} computes the union of two sets of
nodes:
\begin{itemize}
\item those that belong to player \even and have a successor which has some priority $i$ and belongs to the current value of $X_i$, and
\item those that belong to player \odd and have no successor which has some priority $i$ but does not belong to the current value of $X_i$.
\end{itemize} 
Say there is a node $v \in V_\even \cap \bigcup_{i=0}^{d-1} \Diamond(P_i \cap X_i)$ at some moment. Then we must have $v \in V_\even$ and there must be 
some $u$ such that $(v,u) \in E$ and $u \in X_{\Omega(u)}$, i.e.\ player \even can move from $v$ to $u$ and $u$ must have been discovered already
as good in a sense because it belongs to some $X_j$. In fact, this inductive reasoning only makes sense if $\Omega(u)$ is odd.
Remember that player \even must try to avoid odd priorities. If $\Omega(u)$ was even then $u$ need not have been discovered as ``good'' already.
Instead it is important that it has not been discarded as ``bad'' already. This vague notion of being ``good'' or ``bad'' for player \even is
made precise by the fact that the sets $X_i$ are being increased in a fixpoint iteration when $i$ is odd, and are being decreased when it is even.  

In any way it seems sensible to record the fact that $u$ is a good successor of $v$ by making the edge $(v,u)$ part of a winning strategy for
player \even. Likewise, whenever some node $v \in V_\odd$ does \emph{not} belong to $V_\odd \cap \bigcap_{i=0}^{d-1} \Box(\overline{P_i} \cup X_i)$ then 
it must have some successor $u$ which does not belong to $X_{\Omega(u)}$ at this time. Then it seems like a reasonable idea to record the edge
$(v,u)$ as part of a winning strategy for player \odd. 

Note that $n^d$ can be much larger than $e$. Thus, there must be games in which diverging strategy decisions of this kind are being made 
at different times in the algorithm's run, i.e.\ a node $v \in V_\even$ could have two successors $u$ and $u'$, and the recorded strategy decisions
could alternate between $(v,u)$ and $(v,u')$ several times. This raises the question of how to deal with such findings when previously a different
edge for some player's strategy has been found. It is not hard to show that the two most obvious answers of always taking the first, respectively
last decision, do not yield correct winning strategy computations. 

\begin{example} \rm
\label{ex:whichstrategy}
Consider the following parity game. We depict nodes in $V_\even$ by diamond shapes and nodes in $V_\odd$ by box shapes. In this example, every node
has a unique priority and we can identify nodes with their priorities.
\begin{center}
% \begin{tikzpicture}[thick,every node/.style={minimum size=7mm}, node distance=18mm]
%   \node[shape=diamond,draw,label=90:$0$] (a) {$\mathsf{a}$};
%   \node[shape=rectangle,draw,label=90:$0$] (b) [left of=a] {$\mathsf{b}$};
%   \node[shape=rectangle,draw,label=90:$1$] (c) [left of=b] {$\mathsf{c}$};
%   \node[shape=rectangle,draw,label=90:$1$] (d) [right of=a] {$\mathsf{d}$};
%   \node[shape=rectangle,draw,label=90:$0$] (e) [right of=d] {$\mathsf{e}$};

%   \path[draw,->] (a) edge (b)
%                      edge (d)
%                  (b) edge (c)
%                  (c) edge [loop left] ()
%                  (d) edge (e)
%                  (e) edge [loop right] ();
% \end{tikzpicture}
\begin{tikzpicture}[thick,every node/.style={minimum size=7mm}, node distance=18mm]
  \node[shape=diamond,draw]   (0)                    {$0$};
  \node[shape=rectangle,draw] (1) [below left of=0]  {$1$};
  \node[shape=rectangle,draw] (2) [below right of=0] {$2$};
  \node[shape=rectangle,draw] (3) [above right of=0] {$3$};
  \node[shape=rectangle,draw] (4) [above left of=0]  {$4$};

  \path[draw,->] (0) edge (1)
                     edge (2)
                 (2) edge (3)
                 (3) edge (0)
                 (1) edge (4)
                 (4) edge (0);
\end{tikzpicture}
\end{center}
It is not hard to see that player \even wins from $\{0,1,2,3,4\}$ by choosing the edge $\{(0,1)\}$.

However, executing Algorithm~\ref{alg:fpiter} on this game yields the following findings. In the first iteration,
i.e.\ when $\mathit{count}$ equals $[0,0,0,0,1]$, line~\ref{line:evalpsi} returns $\{0,1,3,4\}$. In particular, node $0$ is included because of the
edge $(0,2)$ since we have $2 \in X_2$ but $1 \not\in X_1$ at this moment. When the fixpoint for $X_1$ has been found and the iteration for $X_0$ is
repeated line~\ref{line:evalpsi} returns $\{0,1,3,4\}$ again, and $\mathit{count}$ has value $[0,0,0,1,1]$. Eventually, node $3$ will enter $X_3$
and at moment $[0,0,1,1,1]$ we compute $\{0,1,3,4\}$ but this time node $0$ is included because of the edge $(0,1)$ and the fact that $1 \in X_1$ whereas
$2 \not\in X_2$. Finally, at moment $[0,1,0,0,1]$ we compute the final winning set $\{0,1,2,3,4\}$ and node $0$ is included in it since $2 \in X_2$
but $1 \not\in X_1$. 

Thus, the following edges are encountered as ``good'' for player \even in this order: $(0,2)$, $(0,1)$, $(0,2)$. Obviously, only $(0,1)$, taken as a 
positional strategy, forms a winning strategy.  
\end{example}

The solution to the question of how the discovered edges in the computation of the diamond- and box-operators in line~\ref{line:evalpsi} can be
assembled to form winning strategies lies in the subtleties of \emph{when} those edges are being discovered. To this end, we amend 
Algorithm~\ref{alg:fpiter} such that it records a strategy decision together with a \emph{timestamp} which is the current value of the
array $\mathit{count}$. We add an array $\mathit{str}$ which holds, for every node $v$, a stack of pairs of timestamps and successor nodes. 
Initially, we assume them to be empty. 

\begin{algorithm}[t]
\begin{minipage}{7.8cm}
\begin{algorithmic}[1]
\Function{Diamond}{\,\!}
  \State $S \gets \emptyset$
  \ForAll{$v \in V_\even$}
     \If{$\exists t \in X_{\Omega(t)}$ with $(v,t) \in E$}
       \State $\mathit{str}[v] \gets \mathit{str}[v] :: (\mathit{count}, t)$
       \State $S \gets S \cup \{ v\}$
     \EndIf
  \EndFor
  \State \textbf{return} $S$
\EndFunction
\end{algorithmic}
\end{minipage}
\begin{minipage}{8cm}
\begin{algorithmic}[1]
\Function{Box}{\,\!}
  \State $S \gets V_\odd$
  \ForAll{$v \in V_\odd$}
     \If{$\exists t \not\in X_{\Omega(t)}$ with $(v,t) \in E$}
       \State $\mathit{str}[v] \gets \mathit{str}[v] :: (\mathit{count}, t)$
       \State $S \gets S \setminus \{ v\}$
     \EndIf
  \EndFor
  \State \textbf{return} $S$
\EndFunction
\end{algorithmic}
\end{minipage}

\caption{Recording strategy decisions in the fixpoint iteration algorithm.}
\label{alg:strat}
\end{algorithm}

Algorithm~\ref{alg:strat} works like Algorithm~\ref{alg:fpiter}; the only difference is the additional recording of strategy decisions in 
the computation of the modal operators through the functions \textsc{Diamond} and \textsc{Box}. Here we only present the amendments of these
functions. The main procedure \textsc{FPIter} remains the same.
Note that the array $\mathit{str}$ contains strategy information for both players: $\mathit{str}[v]$ holds decisions for $\aplayer$ when 
$v \in V_\aplayer$.

\subsection{Eventually-Positional Strategies and a Finite Pay-Off Game}

Our next aim is to prove that Algorithm~\ref{alg:strat} does indeed compute winning strategies for both players. First we need to reveal the nature of these strategies. As said before, they are not positional but they are finite-memory. 

\begin{definition}
Let $\mathcal{G} = (V,V_\even,V_\odd,E,\Omega)$ be a finite parity game and let $\sigma: V^*V_\aplayer \to V$ be a strategy for player \aplayer. We say 
that $\sigma$ is \emph{eventually-positional} if there is a $k \in \Nat$ such that for all $v \in V_\aplayer$ and all $w,w' \in V^*$ with 
$|w| \ge k$ and $|w'| \ge k$ we have $\sigma(wv) = \sigma(w'v)$.
\end{definition}

Intuitively, an eventually-positional strategy forces a player to commit to a positional strategy after a certain amount of time in which the
strategy may depend arbitrarily on the history of a play. In that sense, eventually-positional strategies are finite-memory strategies but
the finite memory is only used at the beginning and can be discarded at some point. Moreover, the positional part of any eventually-positional winning strategy forms a positional winning strategy.

In order to show that Algorithm~\ref{alg:strat} correctly computes eventually-positional winning strategies for parity games we introduce an
auxiliary pay-off game. It is played on the graph of a parity game $(V,V_\even,V_\odd,E,\Omega)$ with $n$ nodes and $d$ priorities as follows. 
Starting in a particular node $v_0$, the two players are given a \emph{credit} which is a $d$-tuple of values between $0$ and $n$. For a credit $C$ and a priority $h$ we write $C(h)$ for the $h$-th value in this tuple. 

A play is a finite sequence of the form $(v_0,C_0),(v_1,C_1),\ldots$ such that $v_0$ is the starting node and $C_0$ is the credit given initially.  A
play is finished in a configuration $(v_k,C_k)$ if $C_k(\Omega(v_k))= 0$. If $\Omega(v_k)$ is even then player \odd has exceeded all his credit and
player \even wins. Otherwise player \odd wins if $\Omega(v_k)$ is odd. In the $i$-th round when $v_i \in V_\aplayer$ player \aplayer chooses $v_{i+1}$ 
among $v_i$'s successors. The new credit value is obtained deterministically as follows.
\[
C_{i+1}(h) \enspace = \enspace \begin{cases}
C_i(h) &, \text{ if } h > \Omega(v_i) \\
C_i(h)-1 &, \text{ if } h = \Omega(v_i) \\
n &, \text{ if } h < \Omega(v_i)
\end{cases}
\]
Thus, whenever the game moves away from a node of priority $h$, some of the $h$-credit must be payed but the players gain the maximal credit for all
values below $h$ again. The credit for values larger than $h$ remains untouched. A player who cannot move loses, but notice that the winning condition is
checked first in each round and only when it does not hold the corresponding player has to move.

These pay-off games form the connection between eventually-positional winning strategies for parity games and Algorithm~\ref{alg:strat}: we will show that this algorithm does indeed compute winning strategies for this pay-off game, and that winning strategies in the pay-off game for a sufficiently large initial credit give rise to winning strategies in parity games.

%Moreover, a positional winning strategy for \even for the pay-off game starting from $(n,\dotsc,n)$ must also be a positional winning strategy for the associated parity game.%: If \even can force her opponent to see $n$ nodes of an even priority without a higher priority occurring inbetween, either all nodes of the game are of the same, even priority, or at least one node has been seen twice. Assuming that \odd also plays with a positional strategy, this can be extended to an infinite play that \even wins. Conversely, any winning strategy for \odd in the pay-off game can be converted into a winning strategy on the parity game.

\begin{theorem}
\label{thm:payoff2parity}
Let $\mathcal{G} = (V,V_\even,V_\odd,E,\Omega)$ be a finite parity game with $n$ nodes and index $d$. Player \aplayer wins the parity game 
$\mathcal{G}$ from any node $v \in V$ with an eventually-positional winning strategy iff he/she wins the pay-off game from $v$ with the initial 
value $(n,\dotsc,n)$.% with a positional strategy, i.e., a strategy that only depends on the node in question, not the current credit. 
\end{theorem}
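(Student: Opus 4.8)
The plan is to reduce everything to the plain notion of winning. First observe that ``\aplayer wins the parity game from $v$ with an eventually-positional winning strategy'' is equivalent to ``\aplayer wins the parity game from $v$'': a positional strategy is in particular eventually-positional (take $k=0$), so Proposition~\ref{prop:poswinstr} supplies one direction, and any eventually-positional winning strategy is a winning strategy, supplying the other. Hence it suffices to prove that \aplayer wins the parity game from $v$ if and only if \aplayer wins the pay-off game from $v$ with initial credit $(n,\dots,n)$.

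Next I would record two easy facts about the pay-off game. First, every play of it is finite: in an infinite play let $m$ be the greatest priority occurring infinitely often; after the last occurrence of a priority larger than $m$ the $m$-th credit entry is never reset and is decremented every time a node of priority $m$ is left, hence it hits $0$ and the play finishes -- a contradiction. Second, at most one player can have a winning strategy for the pay-off game from a fixed configuration, since the unique play consistent with two such strategies is finite by the first fact and would be won by both players. Together with the determinacy of parity games (every node is won by exactly one player), these reduce the iff to a single implication, to be proved for \emph{both} players: if \aplayer wins the parity game from $v$, then \aplayer wins the pay-off game from $v$ with credit $(n,\dots,n)$. Indeed, if \aplayer won the pay-off game from $v$ but not the parity game, then \opponent would win the parity game from $v$ and hence, by this implication, also the pay-off game from $v$, contradicting the second fact.

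For the implication, take a positional winning strategy $\tau$ for \aplayer in the parity game from $v$ (Proposition~\ref{prop:poswinstr}) and argue that $\tau$, read as a strategy in the pay-off game, is winning from $v$ with credit $(n,\dots,n)$. Consider a play $(v_0,C_0),\dots,(v_k,C_k)$ consistent with $\tau$ and finishing at step $k$, and put $h := \Omega(v_k)$, so $C_k(h)=0$. Let $j_0$ be the position immediately after the last step at which a node of priority larger than $h$ was left (or $j_0=0$ if there is none). By the credit update rule, between positions $j_0$ and $k$ the $h$-entry of the credit is never reset and decreases by exactly one each time a node of priority $h$ is left; since a node of priority $h$ whose $h$-credit is $0$ can never be left (the play would have finished there), exactly $n$ nodes of priority $h$ occur among $v_{j_0},\dots,v_{k-1}$, and all of $v_{j_0},\dots,v_k$ have priority at most $h$. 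Counting $v_k$ too, at least $n+1$ of the positions in $\{j_0,\dots,k\}$ carry a node of priority $h$, so by pigeonhole two of them carry the same node $w$ with $\Omega(w)=h$. This exhibits a cycle through $w$ all of whose priorities are at most $h$; being part of a play consistent with the positional strategy $\tau$, the cycle is consistent with $\tau$, and so is the infinite play that follows $v_0,\dots$ up to $w$ and then loops the cycle forever. That play has $h$ as its greatest priority seen infinitely often, and since $\tau$ is winning in the parity game, $h$ has \aplayer's parity; therefore \aplayer wins the finished pay-off play.

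The step I expect to be delicate is the credit bookkeeping in the last paragraph: one must argue carefully that the $h$-entry of the credit is not reset after position $j_0$ and that a node of priority $h$ is never left once its $h$-credit is exhausted, so that precisely $n$ such ``leavings'' take place and the cycle produced by the pigeonhole argument has maximal priority exactly $h$ rather than something smaller -- which is what makes the parity of $h$ decide the pay-off play in accordance with the parity winning condition.
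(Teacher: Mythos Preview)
Your argument is correct. The decomposition, however, differs from the paper's. The paper proves the two directions separately: for ``pay-off $\Rightarrow$ parity'' it lifts an infinite parity play against $\sigma$ to the pay-off game, applies the pigeonhole/cycle argument to find a cycle whose maximal priority is of \aplayer's parity, and then \emph{constructs} an eventually-positional strategy (play $\sigma$ until such a cycle closes, then loop it); for ``parity $\Rightarrow$ pay-off'' it passes to a positional strategy via Proposition~\ref{prop:poswinstr} and runs essentially the same cycle argument by contradiction. You instead first collapse ``eventually-positional winning'' to ``winning'' via Proposition~\ref{prop:poswinstr}, then use finiteness of pay-off plays and the resulting uniqueness of the pay-off winner together with parity determinacy to reduce the biconditional to the single implication ``parity $\Rightarrow$ pay-off'', proved uniformly for both players with the pigeonhole/cycle argument. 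Your route is tidier and makes the termination of the pay-off game explicit (the paper's forward direction tacitly uses it), and it avoids having to spell out the somewhat informal construction of the eventually-positional strategy; the paper's route, on the other hand, is more constructive in that it actually exhibits the eventually-positional strategy coming out of a pay-off winning strategy, which is in the spirit of the surrounding section on extracting strategies.
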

 
\begin{proof}
  Consider the parity game $\mathcal{G}$ from node $v$ and assume that \even has a winning strategy $\sigma$ for the pay-off game with initial credit 
  $C_0 = (n,\dotsc,n)$ from $v$. The case of player \odd is identical. 
  We claim that \even wins the underlying parity game with $\sigma$ as well. Assume that his/her opponent plays with their best strategy 
  against $\sigma$. The result is an infinite play $v_0,v_1,\ldots$, and this can be lifted to a play $(v_0,C_0),(v_1,C_1),\ldots$ in the pay-off game 
  by starting with $C_0$ and adding consecutive credits deterministically. By assumption, this play is winning for \even in the pay-off game. Therefore,
  there must be some $k$ and an even $h$ such that $C_k(h) = 0$. Thus, there are some $j_0,j_1,\ldots,j_n$ with $0 \le j_1 < j_2 < \ldots j_n = k$ such 
  for all $i=0,\ldots,n$ we have $\Omega(v_{j_i}) = h$ and $C_{j_i}(h) = n-i$ and no odd priority larger then $h$ has occurred between $v_{j_0}$ and $v_k$.
  In other words: for some value in the counter at an even position to decrease down to $0$ from $n$ the play must have visited $n+1$ nodes of that priority
  with no higher priority in between. Since there are only $n$ nodes in $\mathcal{G}$ there must be $i,i'$ such that $i < i'$ and $v_{j_i} = v_{j_{i'}}$.
  In particular, this play contains a cycle in which the highest priority is even. In the parity game, \even can play according to $\sigma$ and repeat
  decision when a cycle has occurred with this property. This is obviously winning for her, and it is eventually-positional.

  Conversely, assume that \even has an eventually positional winning strategy for the parity game. According to Proposition~\ref{prop:poswinstr} she also
  has a positional winning strategy. This is also winning for the associated pay-off game: take a play $(v_0,C_0),(v_1,C_1),\ldots,(v_k,C_k)$ of the
  pay-off game in which \even chose successors according to her positional winning strategy in the parity game. Assume otherwise that \odd wins this play
  in the pay-off game. As above, it would have to contain a cycle on which the highest priority is odd. Since we assumed \even to play with a positional
  strategy, this play in the parity game -- continued ad infinitum -- would keep on going through this cycle. Therefore it would be won by \odd which
  contradicts the assumption that \even's strategy was winning.
\end{proof}

% Note that the stipulation that \opponent plays with a positional strategy is necessary in the first part of the proof, for otherwise the resulting
% strategy is not eventually-positional (but still finite-memory). This is not a problem by Theorem~\ref{thm:positional-strat-obda}. An eventually
% positional strategy can be obtained by a depth-first search even against non-positional opponents. This search, however, amounts to computing a
% positional strategy and is the focus of the next subsection.

%It is not hard to prove by induction that a node of priority $h$ is in the set $X_h$  at  timestamp $C$ during the run of Algorithm~ \ref{alg:fpiter} if and only if player \even wins the above pay-off game with  $\mathit{count}$ as initial credit. In this sense, the fixpoint iteration algorithm finds winning strategies for the pay-off game.

\begin{theorem}
\label{thm:payoff2algorithm}
Let $\mathcal{G} = (V,V_\even,V_\odd,E,\Omega)$ be a finite parity game with $n$ nodes and index $d$. Let $v$ be a node of priority $h$. Then \even has a winning strategy with initial credit $C = (c_{d-1},\ldots,c_0)$ from $v$ if and only $v \in X_h$ at moment  $C = (c_{d-1},\ldots,c_0)$ during a run of Algorithm~\ref{alg:strat}.
\end{theorem}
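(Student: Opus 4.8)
The plan is to prove the equivalence by induction on the timestamp $C = (c_{d-1},\ldots,c_0)$ ordered lexicographically; this is also the chronological order in which Algorithm~\ref{alg:strat} produces the values of $\mathit{count}$. Write $X_h^C$ for the value of the variable $X_h$ at the moment $\mathit{count} = C$, and $e_j$ for the $d$-tuple that is $1$ in position $j$ and $0$ elsewhere. Three preliminary remarks: since $X_h$ is re-assigned only when $\mathit{count}[h]$ changes, $X_h^C$ depends only on $(c_{d-1},\ldots,c_h)$ once $h\ge1$; on the pay-off side, whether \even wins from $(v,C)$ with $\Omega(v)=h$ also depends only on $(c_{d-1},\ldots,c_h)$, since the first move of any play resets every credit component below $h$ to $n$; and every move strictly decreases the credit tuple lexicographically, so the pay-off game is finite-duration and its winner is determined. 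Moreover, because Algorithm~\ref{alg:fpiter} merely realises the nested fixpoint iteration of $\Phi_d$, the following facts hold: (a) if $c_h=0$ then $X_h^C=V$ for $h$ even and $X_h^C=\emptyset$ for $h$ odd, as installed by \textsc{Init}; (b) if $c_0\ge1$ then $v\in X_0^C$ iff $v\in\sem{\Psi}{\rho}{\mathcal{G}}$ for the assignment $\rho$ with $\rho(X_j)=X_j^{C-e_0}$ for all $j$; and (c) if $c_h\ge1$ and $h\ge1$ then $v\in X_h^C$ iff $v\in\sem{\Psi}{\rho}{\mathcal{G}}$ for the assignment $\rho$ with $\rho(X_j)=X_j^C$ for $j>h$, $\rho(X_h)=X_h^{C-e_h}$, and $\rho(X_j)=X_h^C$ for all $j<h$. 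Fact (c) is the crucial one: $X_h^C$ is obtained from $X_h^{C-e_h}$ by running all lower fixpoints to their simultaneous fixpoint with $X_h$ frozen, and by the standard identity between nested and simultaneous fixpoints the variables $X_{h-1},\ldots,X_0$ all take that value $X_h^C$ there.

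For the base case $c_h=0$: fact (a) gives $v\in X_h^C$ iff $h$ is even, while in the pay-off game the play is already finished in $(v,C)$ because $C(h)=0$, and \even wins iff $\Omega(v)=h$ is even. For the inductive step, assume $c_h\ge1$ and unfold one move of the pay-off game from $(v,C)$: the owner of $v$ picks a successor $u$ of $v$ -- existentially if $v\in V_\even$, universally if $v\in V_\odd$ -- and the new credit is $C' := (c_{d-1},\ldots,c_{h+1},c_h-1,n,\ldots,n)$, which is just $C-e_0$ when $h=0$. Hence \even wins from $(v,C)$ iff either $v\in V_\even$ and \even wins from $(u,C')$ for some successor $u$, or $v\in V_\odd$ and \even wins from $(u,C')$ for every successor $u$. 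Since $C'<C$ lexicographically, the induction hypothesis at $C'$ rewrites ``\even wins from $(u,C')$'' as ``$u\in X_{\Omega(u)}^{C'}$''; by the definition of $\Psi$ (equivalently, of the functions \textsc{Diamond} and \textsc{Box}) this is precisely the statement that $v\in\sem{\Psi}{\rho'}{\mathcal{G}}$ for the assignment $\rho'$ with $\rho'(X_j)=X_j^{C'}$ for all $j$.

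It remains to identify $\sem{\Psi}{\rho'}{\mathcal{G}}$ with $X_h^C$. When $h=0$, $C'=C-e_0$, so $\rho'$ is the assignment of fact (b) and $v\in\sem{\Psi}{\rho'}{\mathcal{G}}$ iff $v\in X_0^C$, as wanted. When $h\ge1$, compare $C'$ and $C$: they agree on every position above $h$, so $X_j^{C'}=X_j^C$ for $j>h$; position $h$ of $C'$ is $c_h-1$, so $X_h^{C'}=X_h^{C-e_h}$; and all positions of $C'$ below $h$ equal $n$, so at the moment $\mathit{count}=C'$ the variables $X_0,\ldots,X_{h-1}$ have been driven to their common simultaneous fixpoint, which -- with $X_h$ frozen at $X_h^{C-e_h}$ -- is exactly $X_h^C$, i.e.\ $X_j^{C'}=X_h^C$ for all $j<h$. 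Plugging these three identities into fact (c) shows $v\in\sem{\Psi}{\rho'}{\mathcal{G}}$ iff $v\in X_h^C$, which closes the induction.

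The step I expect to cost the most effort is fact (c): one must carefully check that the $c_h$-th approximant of the level-$h$ fixpoint is a simultaneous fixpoint of all variables below level $h$, hence coincides with a single evaluation of $\Psi$ in which $X_{h-1},\ldots,X_0$ are all instantiated by $X_h^C$, and then line this pattern up with the ``reset every lower credit to $n$'' rule of the pay-off game. A secondary subtlety is what ``moment $C$'' means for timestamps $C$ that the optimised loop of Algorithm~\ref{alg:fpiter} skips because an inner fixpoint has already stabilised; here one argues about the idealised timestamp-indexed iteration, which assigns $X_h$ a value at every timestamp and agrees with the algorithm's variables at every timestamp the algorithm actually visits, and one verifies the boundary cases -- $c_h=0$, and $c_h-1=0$ feeding back into \textsc{Init} -- against the initialisation.
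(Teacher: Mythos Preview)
Your proof is correct and follows the same line as the paper's: lexicographic induction on the timestamp $C$, with the inductive step passing to the predecessor $C' = (c_{d-1},\ldots,c_{h+1},c_h-1,n,\ldots,n)$ and invoking the hypothesis there, exactly as the paper does. You spell out more than the paper---in particular your fact~(c), that at $C'$ the inner variables $X_0,\ldots,X_{h-1}$ all coincide with $X_h^C$, and the remark about idealising the iteration so that every counter runs to $n$---but these are precisely the details the paper's proof leaves implicit.
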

\begin{proof}
  Note that Algorithm~\ref{alg:strat} does not usually produce output for every possible timestamp, since a fixpoint iteration can stabilise well
  before the counter for a specific fixpoint reaches $n$. However, if $h$ is a priority and the value for $X_h$ at timestamp
  $(c_{d-1},\dotsc,c_{h+1},c_h,c_{h-1},\dotsc,c_0)$ equals that of $(c_{d-1},\dotsc,c_{h+1},c_h+1,c_{h-1},\dotsc,c_0)$, then the values for all $X_i$ are
  equal for all timestamps $(c_{d},\dotsc,c_{h+1},c'_{h},c'_{h-1},\dotsc,c'_0)$ such that $c'_i \geq c_i$ for all $i \leq h$. Hence, for the sake of the
  proof we can assume that Algorithm~\ref{alg:strat} keeps on iterating the values of any $X_i$ until the counter reaches $n$ for that $X_i$.

  The proof is by induction. If $C = (0,\dotsc,0)$ then a node $v$ of priority $h$ is in $X_h$ if and only if $h$ is even and player \even wins the
  pay-off game for $v$ only from nodes of even priority.
%
%Moreover, a node $v'$ is in $S$ at timestamp $(0,\dotsc,0,1)$ if and only if it either belongs to $V_{\even}$ and there is a successor of even priority, or if it is in $V_{\odd}$ and all successors are of even priority. In any case, player \even wins the payoff game from $v'$ if and only if $v' \in S$.
%
  Let $C= (c_{d-1},\dotsc,c_0) \ne (0,\dotsc,0)$ and assume that the claim has been proved for all timestamps that are lexicographically smaller. Assume
  that $v \in X_h$ at this moment. If $c_h = 0$ then $h$ must be even and \even wins right away. Consider $C' = (c_{d-1},\dotsc,c_h
  -1,n,\dotsc,n)$. There are two cases: either $v \in V_{\even}$ and there is a successor $v'$ of priority $h'$ that is in $X_{h'}$ at moment $C'$.  By
  the induction hypothesis, \even has a winning strategy from $v'$ with initial credit $C'$, so moving to $v'$ is a winning strategy for $v$. Or $v \in
  V_{\odd}$ and all successors of $v$ are in their respective sets at moment $C'$. By the induction hypothesis, \even has a winning strategy from these
  successors for the game with initial credit $C'$ and, hence, also wins from $v$ with initial credit $C$.
Conversely, assume that $v \notin X_h$ at moment $C$. If $h = 0$ then $h$ is odd and \odd wins right away. Otherwise, an argument as above shows that \odd has a winning strategy for the pay-off game.
\end{proof}

\subsection{Retrieving Positional Strategies}

Theorems~\ref{thm:payoff2parity} and \ref{thm:payoff2algorithm}, together with the fact that Algorithm~\ref{alg:strat} will compute the winning region
for player \even at moment $(n,\ldots,n)$ -- but may terminate earlier because of monotonicity -- link this algorithm to the computation of
eventually-positional winning strategies for parity games and, hence to the computation of positional strategies. However, these strategies are not
obviously present as Example~\ref{ex:whichstrategy} shows. This is because the stored decisions refer to instances of the pay-off game with different
initial credits. Remember that the correct positional decision for player \even in that game was found as that in the middle of three occurring choices:
The choice discovered first by the algorithm belonged to a strategy for an instance of the pay-off game with initial credit not sufficiently high to lift
it into one for the parity game, and the choice discovered last belongs to the non-positional part of an eventually-positional strategy. It remains to
be seen how the data structure $\mathit{str}$ represents such an eventually-positional strategy for the parity game and, in extension, a positional
strategy.

\begin{algorithm}[t]
\begin{minipage}{8.9cm}
\begin{algorithmic}[1]
\Procedure{ExtractStrategy}{}
  \State $n \gets |V|$
  \ForAll{$v \in V$}
    \State $\sigma(v) \gets \bot$; $\mathit{last}(v) \gets \bot$
  \EndFor
  \While{$\exists v$ with $\sigma(v) = \bot$} 
    \State \textsc{Traverse}($\mathit{Owner}(v),v,(n,\ldots,n)$)
  \EndWhile
\EndProcedure
\end{algorithmic}
\vskip3mm
\begin{algorithmic}[1]
\Function{Strip}{$\aplayer,v,C$}
   \State $l \gets |\mathit{str}(v)|$
   \State delete from $\mathit{str}(v)$ all $(w,C')$ with $C' >_\aplayer C$
   \State \textbf{return} $|\mathit{str}(v)| < l$
\EndFunction
\end{algorithmic}
\end{minipage}
\begin{minipage}{7cm}
\begin{algorithmic}[1]
\Function{Traverse}{$\aplayer,v,C$}
  \If{$v \in V_\aplayer$}
    \If{\textsc{Strip}($\aplayer,v,C$)}
      \State \textbf{let} $(w,C')::\ldots = \mathit{str}(v)$
      \State $\sigma(v) \gets w$; $\mathit{last}(v) \gets C'$
      \State \textsc{Traverse}($\aplayer,w,C'$)
    \EndIf
  \Else
    \ForAll{$u \in V$ with $(v,u) \in E$}
      \State \textsc{Traverse}($\aplayer,u,C$)
    \EndFor
  \EndIf
\EndFunction
\end{algorithmic}
\end{minipage}
\caption{Extracting a positional strategy from the data structures built by Algorithm~\ref{alg:strat}.}
\label{alg:extract}
\end{algorithm}

Indeed we show directly how a positional strategy can be extracted from it. This is done by Algorithm~\ref{alg:extract}. It implements a depth-first
search through the game graph that is guided with a timestamp $C$. Intuitively, we extract from $\mathit{str}$ at some node $v$ the winning strategy
for the pay-off game with credit $C$ or, alternatively, the choice at moment $C$ in the eventually-positional winning strategy for the parity game.
When a node $v$ is being reached with credit $C$ we discard all stored choices for moments that were later than $C$ and continue with the latest
moment $C'$ that was before $C$. Thus, the positional strategy is recovered from the run of Algorithm~\ref{alg:strat} back-to-front: choices
that were discovered late in the algorithm's run are those that are made early in the strategy and do not belong to the positional part of the strategy. This is simply because Algorithm~\ref{alg:strat}
discovers nodes by backwards search through the game graph. 

The question of when some timestamp $C'$ is \emph{later} than another timestamp $C$ needs to be made precise. We distinguish two orders on 
timestamps depending on the player for whom a winning strategy is to be extracted. Let $C' = (c'_{d-1},\ldots,c'_0)$ and $C = (c_{d-1},\ldots,c_0)$. 
Then we have $C' >_\even C$, resp.\ $C' >_\odd C$, if there is some \emph{odd}, resp.\ \emph{even}, $j$ such that $c'_j > c_j$ and $c'_h = c_h$ for all 
\emph{odd}, resp.\ \emph{even}, $h > j$. Thus, when building player \even's strategy we only consider odd indices and vice-versa. The reason is 
the simple fact that iterations for some $X_j$ with an even $j$ are greatest fixpoint iterations that gradually remove nodes from the winning sets.
But the reasons for removal are being discovered in the function \textsc{Box} only which builds a strategy for player \odd. In other words, player
\even's strategy decisions are only being discovered in function \textsc{Diamond} and are only meaningfully recorded for least fixpoint iterations,
i.e.\ for those $X_j$ with an odd index $j$. This is why the even positions in the timestamps bear no information for player \even about which
decision should succeed another in an eventually-positional strategy.   

It remains to argue why this procedure is correct. Clearly, if Algorithm~\ref{alg:extract} returns an array $\mathit{str}$ that has a strategy for every node in $V_{\aplayer}$ in \aplayer's winning region, then these strategy decisions form a positional winning strategy for \aplayer. Moreover, if Algorithm~ \ref{alg:extract} were to remove all strategy decisions for a node, this would amount to constructing a winning strategy in the pay-off game for the opposite player. Since the fixpoint iteration algorithm correctly computes winning strategies for this game, Algorithm~\ref{alg:extract} can never remove all entries for a node.

It is worth pointing out that Algorithm~\ref{alg:extract} does nothing else than searching for a successful progress measure annotation (cf.~\cite{Jurdzinski/00}) within the stack of timestamps: It is not hard to see that any array $\mathit{str}$ returned by the algorithm forms  a successful progress measure annotation, and, conversely, any such progress measure annotation of a parity game allows to convert the annotation at each node into an initial credit such that the player who has the annotation wins the associated payoff game.

%The correctness of this procedure stems from the proof of Theorem~ \ref{thm:payoff2parity}. It is an eventually-positional winning strategy for the parity game to play with the strategy from the pay-off game (which is what the fixpoint iteration algorithm computes) until a node of a favorable priority is visited twice while no higher priority is seen in between, and then repeat this strategy in a loop. 

%%% Local Variables: 
%%% mode: latex
%%% TeX-master: "main"
%%% End: 

\section{Practical Considerations}

\subsection{Implemention and Optimisations}

The fixpoint iteration algorithm has been implemented in \textsc{PGSolver}\footnote{\url{https://github.com/tcsprojects/pgsolver}}, 
an open-source collection of algorithms, tools and benchmarks for parity games \cite{fl-atva09}. The computation of winning regions
and strategies in this implementation works in principle as described above but has been optimised as follows:
\begin{itemize}
\item \textbf{Restriction to nodes of single priorities only}. Note that a variable $X_j$ in the Walukiewicz formulas of the form
  $\Phi_d$ is only ever used in the context of $\prio{i} \wedge X_i$ or $\overline{\prio{i}} \vee X_i$. It is therefore possible to
  restrict the content of the sets $X_i$ in Algorithm~\ref{alg:fpiter} to nodes of priority $i$. This has the advantage of reaching
  fixpoints much earlier in general: a fixpoint iteration for $X_i$ then only needs as many steps as there are different nodes of
  priority $i$. This does not improve the asymptotic worst-case estimation drastically but it makes a difference in practice.

  Note, though, that the implementation of this optimisation requires a slight re-writing of Algorithm~\ref{alg:fpiter}. Currently,
  variable $X_0$ is being used to store the winning sets of the underlying pay-off game, and this value is propagated to 
  $X_1,X_2,\ldots$ for as long as corresponding fixpoints have been reached. With this optimisation we would have to compute the
  current winning set in, say, $X$ and then store in $X_0$ the set $X \cap \Omega^{-1}(0)$, and, once this has become stable, store $X \cap \Omega^{-1}(1)$ in $X_1$, etc. 

\item \textbf{Avoiding costly modal operators}. The most costly operations in the evaluation of the expression $\Psi$ in the 
  Walukiewicz formulas are the modal diamond- and box-operators. In particular the second one requires a traversal of the graph 
  back along one edge and then forth along all edges. It is advisable to keep the number of times that these have to be carried
  out low. This is also the reason for re-writing the original Walukiewcz formulas $\Xi_d$ into their equivalent variants $\Phi_d$.
  Note that in one iteration of the while-loop in Algorithm~\ref{alg:fpiter} only the values of variables $X_0,\ldots,X_m$ for some
  $m$ with $0 \le m \le d$ change whereas the values of $X_{m+1},\ldots,X_{d-1}$ remain the same as in the previous iteration. Using the
  formulation $\Phi_d$ of the Walukiewicz formulas we can store and re-use the value of $\bigvee_{i=m+1}^d \Diamond(\prio{i} \wedge X_i)$
  (and similarly for the box-part) and therefore only need to re-compute the modal operators on the small parts of the graph that
  have changed recently. 

\item \textbf{Elimination of initialisations}. Note that Algorithm~\ref{alg:fpiter} re-initialises all variables $X_0,\ldots,X_{i-1}$
  when an iteration for a variable $X_i$ has been finished. This correctly reflects the semantics of nested fixpoint expressions but
  it is also known that it can be optimised. One can make use of monotonicity of the underlying expression and avoid unnecessary
  initialisations which reduces the overall complexity from exponential in $d$ to exponential in $\lceil d/2\rceil$ 
  \cite{Browne:1997:IAE,seidlfp96}. This optimisation is implemented by first checking the smallest index of a variable $X_i$ for 
  which the current fixpoint has not been reached. The formulation in Algorithm~\ref{alg:fpiter} always resets, whilst finding this index,
  the values of $X_0,\ldots,X_{i-1}$. In an optimised way we would first find the index $i$ and then only reset the values
  of $X_{i-1},X_{i-3},X_{i-5},\ldots$ Note that the values of $X_i$ for odd $i$ always increase monotonically whereas those of the
  $X_i$ for even $i$ always decrease monotonically. Thus, it is only necessary to re-set those with indices of the other kind.
\end{itemize}

\subsection{Benchmarks and Lower Bounds}

\begin{table}[t]
\begin{center}
\begin{tabular}{l|l||r|r|r||r|r|r|r}
 \multicolumn{2}{c||}{} & \multicolumn{3}{c||}{input game} & \multicolumn{4}{|c}{solvers} \\
\multicolumn{2}{l||}{benchmark}          & \#nodes & \#edges & index &      REC &       SPM &        SI &  FPIter \\ \hline\hline
\multirow{4}{*}{elevator}     & $n=3$    &     564 &     950 &     3 &    0.01s &     0.03s &     0.03s &   0.11s \\
                              & $n=4$    &    2688 &    4544 &     3 &    0.05s &     0.20s &     0.59s &   1.37s \\
                              & $n=5$    &   15684 &   26354 &     3 &    0.33s &     2.05s &    37.45s &  25.38s \\
                              & $n=6$    &  108336 &  180898 &     3 &    3.01s &    21.35s & $\dagger$ & 389.27s \\ \hline
\multirow{4}{*}{NBA/DBA inclusion}
                              & $n=50$   &    4928 &   54428 &     3 &    0.08s &     0.37s &   156.39s &   0.67s \\
                              & $n=100$  &    9928 &  208928 &     3 &    0.28s &     1.31s & $\dagger$ &   2.56s \\
                              & $n=150$  &   14928 &  463428 &     3 &    0.61s &     3.45s & $\dagger$ &   6.41s \\
                              & $n=200$  &   19928 &  817928 &     3 &    1.23s &     6.45s & $\dagger$ &  11.20s \\ \hline
\multirow{4}{*}{Hanoi}        & $n=5$    &     972 &    1698 &     2 &    0.00s &     0.12s &     0.18s &   0.14s \\
                              & $n=6$    &    2916 &    5100 &     2 &    0.02s &     0.71s &     2.46s &   1.02s \\
                              & $n=7$    &    8748 &   15306 &     2 &    0.08s &     5.13s &    49.20s &   7.56s \\
                              & $n=8$    &   26244 &   45924 &     2 &    0.24s &    29.71s & $\dagger$ &  50.34s \\ \hline\hline
\multirow{4}{*}{rec.\ ladder} & $n=8$    &      40 &      85 &    27 &    0.00s & $\dagger$ &     0.00s &   0.39s \\
                              & $n=10$   &      50 &     107 &    33 &    0.01s & $\dagger$ &     0.01s &   3.48s \\
                              & $n=12$   &      60 &     129 &    39 &    0.02s & $\dagger$ &     0.01s &  25.45s \\
                              & $n=14$   &      70 &     151 &    45 &    0.06s & $\dagger$ &     0.01s & 265.39s \\ \hline
\multirow{3}{*}{Jurdzinski}   & $n=5$    &      51 &     121 &    13 &    0.00s &     0.00s &     0.00s &   1.98s \\
                              & $n=6$    &      60 &     143 &    15 &    0.00s &     0.00s &     0.00s &  15.27s \\
                              & $n=7$    &      69 &     165 &    17 &    0.00s &     0.01s &     0.00s & 111.66s \\ \hline
\multirow{3}{*}{Friedmann}    & $n=2$    &      44 &     101 &    48 &    0.00s & $\dagger$ &     0.00s &   1.11s \\
                              & $n=3$    &      77 &     186 &    66 &    0.00s & $\dagger$ &     0.04s &  32.99s \\
                              & $n=4$    &     119 &     296 &    84 &    0.00s & $\dagger$ &     0.07s & $\dagger$ \\
\end{tabular}
\end{center}
\caption{An empirical comparison of fixpoint iteration against other algorithms.}
\label{tab:benchmarks}
\end{table}

Table~\ref{tab:benchmarks} contains some runtime results on standard benchmarks included in \textsc{PGSolver}
compared against three other prominent parity game solving algorithms: the recursive one (REC) \cite{TCS::Zielonka1998}, small progress 
measures (SPM) \cite{Jurdzinski/00}, and strategy improvement (SI) \cite{conf/csl/Schewe08}. All optimisations that \textsc{PGSolver} 
provides have been disabled in order to see the algorithms' pure behaviour.

The experiments have been carried out on a MacBook Pro with an Intel Core i5 2.53GHz processor and 8GB of memory. The benchmarks contain two
groups: verification problems and handcrafted tough games. The former contain the verification of an elevator system against a fairness property
(\texttt{elevetorverificatio} $n$), the problem of deciding language inclusion between particular NBA and DBA (\texttt{langincl} $n$ \texttt{10})
and the Towers of Hanoi puzzle as a simple 1-person reachability, i.e.\ parity game of index 2. The latter include the recursive ladder games 
designed for REC (\texttt{recursiveladder} $n$) \cite{DBLP:journals/ita/Friedmann11}, Jurdzinski games of width $3$ for SPM 
(\texttt{jurdzinskigame} $n$ \texttt{3}) \cite{Jurdzinski/00}, Friedmann games for SI (\texttt{stratimprgen -pg friedmannsubexp} $n$)
\cite{conf/lics/Friedmann09}. All six are bundled with \texttt{PGSolver} and described in detail in its manual.\footnote{\url{https://github.com/tcsprojects/pgsolver/blob/master/doc/pgsolver.pdf}}  

Table~\ref{tab:benchmarks} presents the experiments' results in terms of the algorithms' running times and the relevant input parameters.\footnote{Here 
the index is measured as the maximal minus minimal priority plus one.} The 
symbol $\dagger$ marks a time-out after at least 10min. It is noticable that fixpoint iteration can beat strategy improvement, in particular on 
the verification problems. There, it shows similar but worse behaviour compared to the small progress measures algorithm. On the handcrafted
games we can see some unexpected results: it is significantly better than SPM in two cases but in that one case which is designed to show SPM's exponential
behaviour SPM is much better than fixpoint iteration. This shows that the reality of which algorithm performs well or badly on which kinds of games
is not sufficiently understood yet.  

In these six competitions fixpoint iteration only ends up last once; in all other cases it beats either small progress measures or strategy improvement. 
On the other hand, it is nowhere near as good in practice as the recursive algorithm but that seems to be unbeatable in general anyway.  
 
It is noticable that fixpoint iteration is exponential on all three handcrafted benchmark families which were designed to exhibit exponential
runtime of REC, SPM and SI, respectively. This suggests the existence of connections between fixpoint iteration and each of these algorithms.
At least in the case of SPM this is clear: the extraction of an eventually-positional strategy from the strategy decision annotations after the
run of Algorithm~\ref{alg:strat} is nothing more than the search for a small progress measure. Fixpoint iteration precomputes the winning regions 
which SPM does not. However, it is known that searching for a winning strategy is not conceptually easier when the winning regions are known. This
may also explain why fixpoint iteration does not outperform SPM in general. Connections to the other two algorithms are less clear and remain to
be investigated properly in the future.

%%% Local Variables: 
%%% mode: latex
%%% TeX-master: "main"
%%% End: 

%\input{./payoffgame}
%\input{./gen}
%\input{./optimizations}

\bibliographystyle{eptcs}
\bibliography{./literature}

\end{document}